\def\ps@headings{%
\def\@oddhead{\mbox{}\scriptsize\rightmark \hfil \thepage}%
\def\@evenhead{\scriptsize\thepage \hfil \leftmark\mbox{}}%
\def\@oddfoot{}%
\def\@evenfoot{}}
\newtheorem{property}{Property}
\newtheorem{lemma}{Lemma}
\newtheorem{theorem}{Theorem}
\newtheorem{corollary}{Corollary}
\DeclareMathOperator{\ii}{In}
\DeclareMathOperator{\oo}{Out}
\DeclareMathOperator{\mm}{Mid}
\begin{document}

\title{Exact Cooperative Regenerating Codes with Minimum-Repair-Bandwidth for Distributed Storage}
\author{\IEEEauthorblockN{Anyu~Wang and
        Zhifang~Zhang}
\IEEEauthorblockA{Key
Laboratory of Mathematics Mechanization\\Academy of Mathematics and Systems Science, CAS\\ Beijing, China\\
Email: wanganyu10@mails.gucas.ac.cn,~ zfz@amss.ac.cn}}

\maketitle
\begin{abstract}
We give an explicit construction of exact cooperative regenerating codes at the MBCR (minimum bandwidth cooperative regeneration) point. Before the paper, the only known explicit MBCR code is given with parameters $n=d+r$ and $d=k$, while our construction applies to all possible values of $n,k,d,r$. The code has a brief expression in the polynomial form and the data reconstruction is accomplished by bivariate polynomial interpolation. It is a scalar code and  operates over a finite field of size $q\geq n$. Besides, we establish several subspace properties for linear exact MBCR codes. Based on these properties we prove that linear exact MBCR codes cannot achieve repair-by-transfer.

\end{abstract}

\section{Introduction}
Distributed storage system provides a preferable solution to the requirements of
large storage volume and widespread data access. To avoid data loss
from storage node failures, erasure coding is frequently used in
distributed storage systems, such as Total Recall
\cite{B:TotalRecall} and Oceanstore \cite{Ketal:OceanStore}. It
encodes the data file into $n$ pieces, distributing to $n$ nodes
respectively in the network, and a data-collector can retrieve the
original file by connecting to any $k$ storage nodes. This process
of data retrieval is referred to as {\it data reconstruction}. When
a node fails or leaves the system, a self-sustaining storage system
should be able to repair or regenerate the node by downloading data
from survival nodes (called {\it helper nodes}). This process is called {\it node repair}, and the total amount of data downloaded during the process is referred
to as {\it repair bandwidth}. Traditional erasure codes mostly need
repair bandwidth equal to the size of the entire file, which is much
larger than the piece stored at each node. Dimakis et al.
\cite{Dimakis:RCfirst} discover a tradeoff between the node storage
and repair bandwidth. They propose a new kind of erasure codes, named
{\it regenerating codes}, which achieves the tradeoff.
Regenerating codes with minimum storage and with minimum repair
bandwidth have been constructed explicitly
\cite{Rashmi:ExplicitMBR,Rashmi:productMatrix,Shah:IA}.

Most of the studies on regenerating codes are for single-failure
recovery, while in several scenarios multiple failures need to be
considered. For example, in Total Recall a repair process is
triggered only after the total number of failed nodes has reached a
predefined threshold. Suppose $r$ newcomers are to be generated to
replace the failed nodes in a system. Comparing with the one-by-one
repair manner, {\it cooperative repair} is more profitable because
the bandwidth between the newcomers is also used. That is, each
newcomer is allowed to firstly download data from $d$ helper nodes  and
then from the other $r-1$ newcomers. The idea of cooperative repair
first appears in \cite{Hu:Cooperative} with $d=n-r$. Then paper
\cite{Wang:MFR} considers the repair with flexible $d$'s. We call
regenerating codes with cooperative repair as {\it cooperative
regenerating codes}. The tradeoff between node storage and repair
bandwidth for cooperative regenerating codes is given in
\cite{Ker:Coordinated}. Two extreme points in the tradeoff are
called MBCR (i.e. minimum bandwidth cooperative regeneration) and
MSCR (i.e. minimum storage cooperative regeneration). They meet
minimum repair bandwidth and minimum storage respectively.

There are two major repair modes in regenerating codes. One is {\it
exact repair}, namely the lost content of the failed node are
regenerated exactly. The other is {\it functional repair} which means
the content of the newcomer may not be the same as in the failed
one, but the system maintains the property of data reconstruction.
General bounds and implicit constructions of regenerating codes with
functional repair can be developed from results of network coding
\cite{WDR07:DeterminiticRC,Hu:Cooperative}. Since exact repair
brings less changes to the system than functional repair, people
cares more about explicit constructions of exact regenerating codes.
Additionally, in practice it is also desirable to  minimize the number of bits a node must read out from its memory during the repair of failed nodes. Recently people \cite{Shah:repairBYtrans,Shum&Hu:repairBYtrans} start to study the {\it repair-by-transfer} regenerating code in which the number of bits read out during the repair is minimal, namely equal to the number of bits to be sent out.

About cooperative regenerating codes, Shum \cite{Shum:MSCR} gives
an explicit construction of exact MSCR codes with parameters $d=k$, then
he and Hu \cite{Shum&Hu:MBCR} construct exact MBCR codes in the case of
$d=k$ and $n=d+r$. Recently, paper \cite{Nicolas:MSCR} constructs exact MSCR codes for $k=2$ and $d\geq k$, and shows impossibility of scalar exact MSCR codes under $k\geq3$ and $d>k$. Paper \cite{Shun&Hu:existence} proves the
existence of MBCR codes with functional repair for general parameters.

In this paper, we explicitly construct an exact MBCR code for all possible values of $n,k,d,r$. The code has a brief expression in the polynomial form and the data reconstruction is accomplished by bivariate polynomial interpolation. Moreover, the code is scalar and  operates over a finite field of size $q\geq n$. Besides, we establish several subspace properties for linear exact MBCR codes. Based on these properties we prove that linear exact MBCR codes cannot achieve repair-by-transfer.

Organization of the paper is as follows. Section 2 describes the problem of cooperative regenerating codes. Section 3 derives subspace properties of exact MBCR codes and proves the impossibility result about repair-by-transfer. Section 4 gives the explicit construction of MBCR codes and Section 5 concludes the paper.

\begin{figure*}[ht]
\begin{center}
\begin{tikzpicture}
\tiny
  \node [minimum size= 7.5mm,circle,draw,fill=blue!20] (source) at(0,2.5) {$S$};

  \node [minimum size= 7.5mm,circle,draw] (in11) at (2,4) {$\ii_1$};
  \node [minimum size= 7.5mm,circle,draw] (in12) at (2,3) {$\ii_2$};
  \node [minimum size= 7.5mm,circle,draw] (in13) at (2,2) {$\ii_3$};
  \node [minimum size= 7.5mm,circle,draw] (in14) at (2,1) {$\ii_4$};
  \node [minimum size= 7.5mm,circle,draw] (in15) at (2,0) {$\ii_5$};
  \node [minimum size= 7.5mm,circle,draw] (out11) at (3.5,4) {$\oo_1$};
  \node [minimum size= 7.5mm,circle,draw] (out12) at (3.5,3) {$\oo_2$};
  \node [minimum size= 7.5mm,circle,draw] (out13) at (3.5,2) {$\oo_3$};
  \node [minimum size= 7.5mm,circle,draw] (out14) at (3.5,1) {$\oo_4$};
  \node [minimum size= 7.5mm,circle,draw] (out15) at (3.5,0) {$\oo_5$};

  \node [minimum size= 7.5mm,circle,draw,fill=green!20] (in23) at (5.5,2) {$\ii_{6}$};
  \node [minimum size= 7.5mm,circle,draw,fill=green!20] (in24) at (5.5,1) {$\ii_{7}$};
  \node [minimum size= 7.5mm,circle,draw,fill=green!20] (mid23) at (7,2) {$\mm_{6}$};
  \node [minimum size= 7.5mm,circle,draw,fill=green!20] (mid24) at (7,1) {$\mm_{7}$};
  \node [minimum size= 7.5mm,circle,draw,fill=green!20] (out23) at (8.5,2) {$\oo_{6}$};
  \node [minimum size= 7.5mm,circle,draw,fill=green!20] (out24) at (8.5,1) {$\oo_{7}$};

  \node [minimum size= 7.5mm,circle,draw,fill=blue!20] (dc1) at (10,3.7) {DC};
  \node [minimum size= 7.5mm,circle,draw,fill=blue!20] (dc2) at (10,2.7) {DC};
  \node [minimum size= 7.5mm,circle,draw,fill=blue!20] (dc3) at (10,0) {DC};

  \draw [thick,color=red] (2,2.2) -- (3.5,1.8);
  \draw [thick,color=red] (2,1.8) -- (3.5,2.2);

  \draw [thick,color=red] (2,1.2) -- (3.5,0.8);
  \draw [thick,color=red] (2,0.8) -- (3.5,1.2);

  \node at (10,1.7) {$\bullet$};
  \node at (10,1.4) {$\bullet$};
  \node at (10,1.1) {$\bullet$};

  \draw [->] (source) to node [auto] {$\infty$} (in11);
  \draw [->] (source) to node [auto] {$\infty$} (in12);
  \draw [->] (source) to node [auto] {$\infty$} (in13);
  \draw [->] (source) to node [auto] {$\infty$} (in14);
  \draw [->] (source) to node [auto] {$\infty$} (in15);

  \draw [->] (in11) to node [auto] {$\alpha$} (out11);
  \draw [->] (in12) to node [auto] {$\alpha$} (out12);
  \draw [->] (in13) to node [auto] {$\alpha$} (out13);
  \draw [->] (in14) to node [auto] {$\alpha$} (out14);
  \draw [->] (in15) to node [auto] {$\alpha$} (out15);

  \draw [->] (out11) to node [auto] {$\beta_1$} (in23);
  \draw [->] (out12) to node [auto] {$\beta_1$} (in23);
  \draw [->] (out15) to node [auto] {$\beta_1$} (in23);
  \draw [->] (out11) to node [auto][swap] {$\beta_1$} (in24);
  \draw [->] (out12) to node [auto][swap] {$\beta_1$} (in24);
  \draw [->] (out15) to node [auto][swap] {$\beta_1$} (in24);

  \draw [->] (in23) to node [auto] {$\infty$} (mid23);
  \draw [->] (in24) to node [auto] {$\infty$} (mid24);
  \draw [->] (in23) to node [auto] {$\beta_2$} (mid24);
  \draw [->] (in24) to node [auto] {$\beta_2$} (mid23);

  \draw [->] (mid23) to node [auto] {$\alpha$} (out23);
  \draw [->] (mid24) to node [auto] {$\alpha$} (out24);

  \draw [->] (out11) to node [auto] {$\infty$} (dc1);
  \draw [->] (out23) to node [auto] {$\infty$} (dc1);

  \draw [->] (out12) to node [auto] {$\infty$} (dc2);
  \draw [->] (out23) to node [auto] {$\infty$} (dc2);

  \draw [->] (out24) to node [auto] {$\infty$} (dc3);
  \draw [->] (out15) to node [auto] {$\infty$} (dc3);

\normalsize
  \node at (0,-1) {Stage -1};
  \node at (2.75,-1) {Stage 0};
  \node at (7,-1) {Stage 1};
\end{tikzpicture}
\caption{An information flow graph of cooperative regenerating code ($n=5,k=2,d=3,r=2$). }\label{fig1}
\end{center}
\end{figure*}
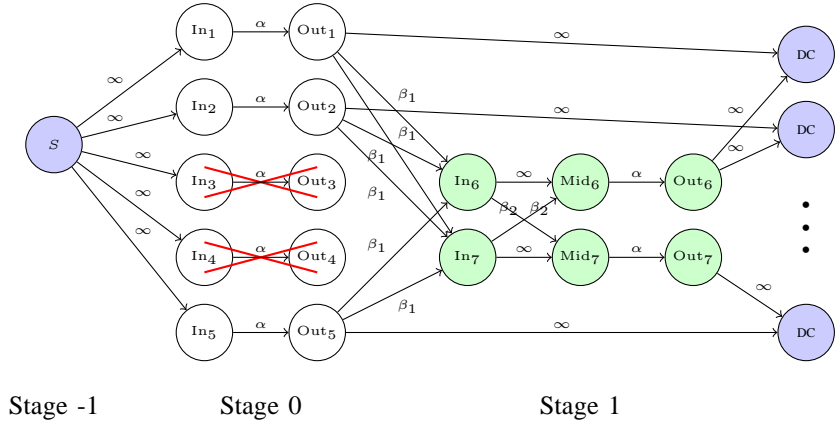

\section{Problem Description}
As in \cite{Shum&Hu:MBCR}, we describe the problem of cooperative
regenerating code in stages and give the corresponding information
flow graph.
\begin{itemize}
\item In stage $-1$, a source vertex $S$ holds the original
data file consisting of $B$ packets.

\item In stage $0$, the encoded file is distributed to $n$ nodes,
each storing $\alpha$ packets. To make the storage clear in the
information flow graph, we split each node $i\in\{1,...,n\}$ into
two nodes $\textsf{In}_i$ and $\textsf{Out}_i$ with a directed edge
of capacity $\alpha$ from $\textsf{In}_i$ to $\textsf{Out}_i$.

\item For $i=1,2,...$, stage $i$ is triggered at the failure of $r$ nodes. Then $r$ newcomers are generated to replace the failed nodes through two phases: firstly, each newcomer connects to $d$ survival nodes (called helper nodes) and downloads $\beta_1$ packets from each; secondly, it downloads $\beta_2$ packets from each of the other $r-1$ newcomers. Similarly, we split each newcomer into three nodes $\textsf{In}_i,\;\textsf{Mid}_i$ and $\textsf{Out}_i$ in the information flow graph.

\item Data-collector DC connecting to any $k$ active nodes can recover the original data file, as required by the {\it data reconstruction} property.
\end{itemize}

Obviously, the parameters should satisfy $d+r\leq n$, $1\leq k\leq n$, $\beta_1\leq \alpha$, etc. Note that if $ d < k$, a data collector can reconstruct the data file by connecting to any $d$ nodes since any set of failed nodes can be regenerated by these $d$ nodes. Thus, a $( n, k, d, r )$ cooperative regenerating code implies a $( n, k' =d, d, r )$ code and vice versa.
Without loss of generality we assume $d \ge k$ throughout the paper.

Figure \ref{fig1} displays an information flow graph for the cooperative regenerating code with parameters $(n=5,k=2,d=3,r=2)$. The labels $\alpha,\beta_1,\beta_2,\infty$ denote the capacity of the corresponding edges. Thus the problem of cooperative regenerating codes induces a multicast problem in such a graph where $S$ is the single source and all possible DC's are the sinks. Furthermore, this graph illustrates a specifical fail-repair process. There are infinitely many fail-repair processes since the node failures and edge links are both variable. Each process gives an information flow graph. Therefore a cooperative regenerating code with parameters $(n,k,d,r,\alpha,\beta_1,\beta_2)$ implies a multicast coding in all the graphs. As a result, the cut-set bound for single-source multicast problem \cite{ACLY:NetCodFirst} gives the following necessary condition for cooperative regenerating code \cite{Hu:Cooperative,Ker:Coordinated,Shum:MSCR}.
\begin{equation}B\le \sum_{h=1}^s l_h \min\{\alpha,(d-\sum_{t=1}^{h-1} l_t) \beta_1+(r-l_h) \beta_2\}\label{eq1}\end{equation}
where $\{l_h\}_{h=1}^s$ is any set of integers satisfying $l_1+\cdots+l_s=k$ and $1\leq l_1,\cdots,l_s\leq r$. Actually, $l_i$ means the data-collector connects to $l_i$ nodes which join the system from stage $i$ and remain active thereafter.

From  bound (\ref{eq1}) it can see there is a tradeoff between node storage $\alpha$ and repair bandwidth $\gamma=d\beta_1+(r-1)\beta_2$. The MBCR point is an extreme point on the tradeoff which has the minimum repair bandwidth. Specifically, it has the parameters \cite{Ker:Coordinated}:
\begin{equation}\alpha=\gamma,\;\;\beta_1=2\beta_2,\;\;\beta_2=\frac{B}{k(2d+r-k)}\;.\label{eq2}\end{equation}
Another extreme point is MSCR with parameters
$$\alpha=\frac{B}{k},\;\;\beta_1=\beta_2=\frac{B}{k(d-k+r)}\;.$$ We focus on MBCR codes in this paper.

However bound (\ref{eq1}) is deduced for functional repair, it is still unknown if this bound is tight for exact cooperative regenerating codes. Explicit constructions of exact MSCR codes and MBCR codes have been given only for special parameters \cite{Shum:MSCR,Shum&Hu:MBCR,Nicolas:MSCR}. In the paper, we explicitly construct an exact MBCR code for all possible values of $n,k,d,r$, which means bound (\ref{eq1}) can be met for exact cooperative regenerating codes at the MBCR point.

\section{Subspace Properties of Exact  MBCR Codes}
We first introduce some notations and review some basic results about linear subspaces.

Consider a linear exact MBCR codes with parameters $(n,k,d,r,\alpha,\beta_1,\beta_2)$. Suppose each packet is an element in a finite field $\mathbb{F}_q$. Then the original data file can be seemed as a vector $u\in\mathbb{F}_q^B$. For consistence we assume the vectors throughout this paper are column vectors. Since the code is linear, each node $i\in\{1,...,n\}$ stores $\alpha$ packets which are linear combinations of the original data packets. Specifically, suppose node $i$ stores $u^\tau g_1^{(i)}, u^\tau g_2^{(i)}, \cdots, u^\tau g_\alpha^{(i)}$, where $g_j^{(i)}\in\mathbb{F}_q^B$ are predetermined for $1\leq j\leq \alpha$. Linear operations performed on the stored packets correspond to the same operations performed on the vectors $g_j^{(i)},\;1\leq j\leq \alpha$. Hence we say node $i$ stores a subspace $W_i$ spanned by $g^{(i)}_1,\cdots,g^{(i)}_{\alpha}$. Similarly, when node $i$ passes packets $u^\tau g_{i_1}^{(i)},...,u^\tau g_{i_{\beta_1}}^{(i)}$ to another node, we say the subspace spanned by $g_{i_1}^{(i)},...,g_{i_{\beta_1}}^{(i)}$ is transferred.

Suppose $R$ is a set of $r$ failed nodes. For $i\in R$, let $\mathcal{H}_R^{(i)}$ denote the set of $d$ helper nodes that each provides $\beta_1$ packets to help repair node $i$. For $i,i'\in R$ and $j\in\mathcal{H}_R^{(i)}$, let $S^{j,i}_{R}$ be the subspace passed from  $j$ to $i$ and $T^{i,i'}_{R}$ the subspace passed from $i$ to $i'$. That is, $S^{j,i}_{R}$ is contribution of helper nodes in the repair process and $T^{i,i'}_{R}$ is exchange between the newcomers. Note that $S^{j,i}_{R}$ and $T^{i,i'}_{R}$ also depend on $\{\mathcal{H}_R^{(l)}\mid l\in R\}$. For simplicity, we fix $\{\mathcal{H}_R^{(l)}\mid l\in R\}$ for each $R$. Thus subspaces with subscript $R$ are always defined under the same $\{\mathcal{H}_R^{(l)}\mid l\in R\}$. obviously, we have $\dim \{W_i\} \le \alpha,\;\dim \{S^{j,i}_{R}\} \le \beta_1$ and $\dim \{T^{i,i^\prime}_{R}\} \le \beta_2$. Furthermore, since the repair is exact, the subspaces $W_i,S^{j,i}_{R},T^{i,i'}_{R}$ keep invariant.

Let $E_1,E_2$ be two subspaces of $\mathbb{F}_q^B$, their sum is  defined by $E_1 + E_2 = \{ e_1 +e_2 | e_1 \in E_1, e_2 \in E_2 \}$.
If $E_1 \cap E_2$ contains only zero vector, $E_1 + E_2$ is called the direct sum of $E_1$ and $E_2$, denoted by $E_1 \oplus E_2$. For $m$ subspaces $E_1,...,E_m\subseteq\mathbb{F}_q^B$, define $\oplus_{i=1}^mE_i=E_1\oplus(\oplus_{i=2}^mE_i)$. The following theorem is a well known result in linear algebra.

\begin{theorem}\label{th1}
Let $E_1,...,E_m$ be $m$ subspaces of $\mathbb{F}_q^B$. The following statements are equivalent:
\begin{itemize}
\item [$(a)$] $\sum_{i=1}^mE_i=\oplus_{i=1}^mE_i$.

\item [$(b)$] $\dim \{\sum_{i=1}^mE_i\}=\sum_{i=1}^m\dim\{E_i\}$.

\item [$(c)$] $E_i\bigcap(\sum_{j\neq i}E_j)=\{0\}$.
\end{itemize}
\end{theorem}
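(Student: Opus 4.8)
The plan is to establish the cycle $(a)\Rightarrow(b)\Rightarrow(c)\Rightarrow(a)$ by induction on $m$, using only two elementary facts about finite-dimensional subspaces: the dimension formula $\dim\{E+F\}=\dim\{E\}+\dim\{F\}-\dim\{E\cap F\}$ for a pair of subspaces, and its immediate consequence, the subadditivity $\dim\{\sum_i E_i\}\le\sum_i\dim\{E_i\}$. The case $m=1$ is trivial, all three statements being vacuously true. For the inductive step I assume the equivalence for every collection of fewer than $m$ subspaces, and I read the recursive definition $\oplus_{i=1}^m E_i=E_1\oplus(\oplus_{i=2}^m E_i)$ as the conjunction of two requirements: $\sum_{i=2}^m E_i=\oplus_{i=2}^m E_i$ (the tail sum is direct) and $E_1\cap(\sum_{i=2}^m E_i)=\{0\}$.

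For $(a)\Rightarrow(b)$, statement $(a)$ says the tail sum is direct, so the inductive hypothesis gives $\dim\{\sum_{i=2}^m E_i\}=\sum_{i=2}^m\dim\{E_i\}$; feeding $E_1$ and $\sum_{i=2}^m E_i$ into the two-subspace dimension formula together with $E_1\cap(\sum_{i=2}^m E_i)=\{0\}$ produces $(b)$. For $(b)\Rightarrow(c)$, fix an index $i$ and apply the dimension formula to $E_i$ and $\sum_{j\ne i}E_j$; bounding $\dim\{\sum_{j\ne i}E_j\}$ from above by subadditivity and substituting $\dim\{\sum_{j=1}^m E_j\}=\sum_{j=1}^m\dim\{E_j\}$ from $(b)$ forces $\dim\{E_i\cap\sum_{j\ne i}E_j\}\le 0$, hence $E_i\cap\sum_{j\ne i}E_j=\{0\}$; since $i$ is arbitrary, $(c)$ holds.

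For $(c)\Rightarrow(a)$, observe first that the instance of $(c)$ with $i=1$ already gives $E_1\cap(\sum_{j=2}^m E_j)=\{0\}$. Next, the subcollection $E_2,\dots,E_m$ still satisfies condition $(c)$, because for each $i\in\{2,\dots,m\}$ one has $\sum_{j\in\{2,\dots,m\},\,j\ne i}E_j\subseteq\sum_{j\ne i}E_j$, so $E_i\cap(\sum_{j\in\{2,\dots,m\},\,j\ne i}E_j)\subseteq E_i\cap(\sum_{j\ne i}E_j)=\{0\}$; hence by the inductive hypothesis $\sum_{j=2}^m E_j$ is direct. Combining the two observations yields $\sum_{i=1}^m E_i=E_1\oplus(\oplus_{i=2}^m E_i)=\oplus_{i=1}^m E_i$, which is $(a)$, closing the cycle. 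I do not expect a genuine obstacle here, as the statement is classical; the only point requiring care is the bookkeeping around the recursive definition of the $m$-fold direct sum and applying the inductive hypothesis to the correct subcollection $E_2,\dots,E_m$ with the correct hypothesis in each of the three implications.
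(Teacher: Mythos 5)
Your proof is correct. Note that the paper itself offers no proof of this theorem at all: it is stated as ``a well known result in linear algebra'' and used as a black box, so there is no argument to compare against. Your induction on $m$ through the cycle $(a)\Rightarrow(b)\Rightarrow(c)\Rightarrow(a)$ is the standard way to fill this in, and each step checks out: the $(a)\Rightarrow(b)$ step correctly unpacks the recursive definition of $\oplus_{i=1}^m E_i$ into the directness of the tail plus $E_1\cap(\sum_{i\ge2}E_i)=\{0\}$ and applies the two-subspace dimension formula; the $(b)\Rightarrow(c)$ step (which, as you implicitly use, needs no induction) correctly forces $\dim\{E_i\cap\sum_{j\ne i}E_j\}\le 0$ via subadditivity; and the $(c)\Rightarrow(a)$ step correctly observes that condition $(c)$ is inherited by the subcollection $E_2,\dots,E_m$ because the partial sums only shrink.
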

Now we are ready to investigate subspace properties of linear exact MBCR codes.

\begin{lemma}\label{le1}
Suppose $I\subseteq R$ and $J\subseteq \bigcap_{i\in I} \mathcal{H}_R^{(i)}$. Moreover, $|I|=a$ and $|J|=b$.
Then
$$ \dim \{ \sum_{i\in I} W_{i} \} - \dim \{ (\sum_{i\in I} W_{i}) \cap (\sum_{j\in J} W_{j}) \}$$$$\le a ( (d-b) \beta_1 + (r-a) \beta_2 )\;.$$
\end{lemma}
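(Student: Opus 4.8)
The plan is to bound the left-hand side by exhibiting an explicit spanning set for a subspace that "completes" $\sum_{j\in J}W_j$ inside $\sum_{i\in I}W_i$, using the exact-repair structure. The key observation is that each node $i\in I$ is regenerated exactly, so $W_i$ is recovered from the $\beta_1$-packets it downloads from its $d$ helpers in $\mathcal H_R^{(i)}$ together with the $\beta_2$-packets it downloads from the other $r-1$ newcomers in $R$. In subspace language this says $W_i \subseteq \big(\sum_{j\in\mathcal H_R^{(i)}} S^{j,i}_R\big) + \big(\sum_{i'\in R\setminus\{i\}} T^{i',i}_R\big)$. Now I would split the helper contribution according to whether the helper lies in $J$ or not: for $j\in J$ we have $S^{j,i}_R\subseteq W_j\subseteq \sum_{j\in J}W_j$ (since $J\subseteq\mathcal H_R^{(i)}$ as $J\subseteq\bigcap_{i\in I}\mathcal H_R^{(i)}$), so those contributions are already absorbed. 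Hence, working modulo $\sum_{j\in J}W_j$, the subspace $W_i$ is spanned by the images of the $S^{j,i}_R$ for $j\in\mathcal H_R^{(i)}\setminus J$ (there are $d-b$ such helpers, each contributing $\le\beta_1$) together with the images of the $T^{i',i}_R$ for $i'\in R\setminus\{i\}$ (there are $r-1$ such, each contributing $\le\beta_2$).

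Summing over $i\in I$, I get that $\sum_{i\in I}W_i$, modulo $\sum_{j\in J}W_j$, is spanned by all the $S^{j,i}_R$ with $i\in I,\ j\in\mathcal H_R^{(i)}\setminus J$ and all the $T^{i',i}_R$ with $i\in I,\ i'\in R\setminus\{i\}$. The dimension count of this spanning set is $a(d-b)\beta_1$ from the helper part plus at most $a(r-1)\beta_2$ from the exchange part — but one can do slightly better on the exchange part by noting that for a pair $i,i'$ both in $I$, the subspaces $T^{i',i}_R$ and $T^{i,i'}_R$ are each sent during the same cooperative phase; more to the point, when $i'\in I$ the subspace $T^{i',i}_R$ need not be counted separately because… actually the cleaner route is: for each fixed $i\in I$, among the $r-1$ newcomers $i'\in R\setminus\{i\}$, exactly $a-1$ of them lie in $I$ and their exchanged subspaces are being double-counted across the sum over $i\in I$. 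Pairing $T^{i',i}_R$ with $T^{i,i'}_R$ for $i,i'\in I$ shows that the $a(a-1)$ such subspaces can be organized so that only $\binom{a}{2}\cdot 2\beta_2 = a(a-1)\beta_2$ — hmm, that gives no saving. The correct accounting is: each $i\in I$ contributes $(d-b)\beta_1$ from helpers and $(r-1)\beta_2$ from all newcomers, and then I remove the overcount: for $i,i'\in I$, $W_{i'}$ already contains (a superspace of) the information needed to generate $T^{i',i}_R$-type content, so $T^{i',i}_R\subseteq W_{i'}\subseteq\sum_{i\in I}W_i$ is redundant as a generator and, being from newcomers in $I$, can be dropped; this removes $a-1$ of the $r-1$ exchange terms for each $i$, leaving $(r-a)\beta_2$ per node. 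This yields the bound $a((d-b)\beta_1+(r-a)\beta_2)$.

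The main obstacle I anticipate is making the last step — the removal of the redundant $T^{i',i}_R$ and $S^{j,i}_R$ generators when $i',j\in I\cup J$ — fully rigorous rather than hand-wavy. One has to be careful that "$W_{i'}$ contains the content generating $T^{i',i}_R$" is literally the statement $T^{i',i}_R\subseteq W_{i'}$ (the subspace $i'$ passes to $i$ is spanned by linear combinations of what $i'$ stores), which is immediate, and likewise $S^{j,i}_R\subseteq W_j$ for a helper $j$; the subtlety is purely bookkeeping: after replacing $\sum_{i\in I}W_i$ by its generating set $\{S^{j,i}_R : i\in I, j\in\mathcal H_R^{(i)}\setminus J\}\cup\{T^{i',i}_R : i\in I, i'\in R\setminus(I\cup\{i\})\}$ modulo $\sum_{j\in J}W_j$, one invokes the trivial fact $\dim\{\sum_i\text{(span of generating sets)}\}\le\sum_i(\text{sizes})$ and then $\dim\{X\}-\dim\{X\cap Y\}=\dim\{(X+Y)/Y\}\le \dim(\text{image of generators of }X)$. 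I would state this quotient identity explicitly at the start and then the inequality falls out by counting generators: $a$ nodes, each giving $(d-b)$ blocks of size $\le\beta_1$ and $(r-a)$ blocks of size $\le\beta_2$.
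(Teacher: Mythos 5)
Your overall skeleton matches the paper's: start from the exact-repair inclusion $W_i\subseteq\sum_{j\in\mathcal H_R^{(i)}}S^{j,i}_R+\sum_{i'\in R\setminus\{i\}}T^{i',i}_R$, absorb the $S^{j,i}_R$ with $j\in J$ into $\sum_{j\in J}W_j$, use the quotient identity $\dim\{X\}-\dim\{X\cap Y\}=\dim\{X+Y\}-\dim\{Y\}$, and count the remaining generators. The helper-side accounting ($a(d-b)\beta_1$) is fine. But the step you yourself flag as delicate --- discarding the exchange terms $T^{i',i}_R$ with $i'\in I$ --- is where the argument actually breaks, and your proposed justification does not repair it. You argue that $T^{i',i}_R\subseteq W_{i'}\subseteq\sum_{i\in I}W_i$ makes these generators ``redundant.'' That is circular: you are bounding $\sum_{i\in I}W_i$ from above by a sum of generators, and the fact that some generator is contained in the left-hand side does not allow you to delete it from the right-hand side (from $V\subseteq A+B$ and $B\subseteq V$ one cannot conclude $V\subseteq A$). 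Substituting $W_{i'}$ back in for $T^{i',i}_R$ just reproduces $\sum_{i\in I}W_i$ on the right and yields a vacuous inclusion. Moreover, the containment $T^{i',i}_R\subseteq W_{i'}$ is not ``immediate'' at this stage: $W_{i'}$ is what the newcomer $i'$ ultimately stores, whereas what it sends in the second phase is computed from what it downloaded in the first phase; that these transfers lie inside $W_{i'}$ is a consequence of Property 2, which is proved \emph{after} (and from) this lemma.

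The missing idea is the different containment
\begin{equation*}
T^{i',i}_R\;\subseteq\;\sum_{j\in\mathcal H_R^{(i')}}S^{j,i'}_R ,
\end{equation*}
i.e.\ the packets a newcomer $i'$ passes on in the second phase are linear combinations of the packets it \emph{received in the first phase} (not of what it finally stores). With this, each intra-$I$ exchange term $T^{i',i}_R$ ($i'\in I$) is absorbed into $\sum_{j\in\mathcal H_R^{(i')}}S^{j,i'}_R$, which then splits as before into the $J$-part (swallowed by $\sum_{j\in J}W_j$, using $J\subseteq\mathcal H_R^{(i')}$) and the part $\sum_{j\in\mathcal H_R^{(i')}\setminus J}S^{j,i'}_R$, which is already among the generators you are counting for node $i'$. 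This is exactly step $(a)$ in the paper's chain of inclusions, and it is what legitimately reduces the exchange count from $(r-1)\beta_2$ to $(r-a)\beta_2$ per node. Once you replace your redundancy argument with this containment, the rest of your write-up goes through.
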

\begin{proof}
Denote $ \tilde{R} = R \backslash I$ and $\tilde{\mathcal{H}}_R^{(i)} = \mathcal{H}_R^{(i)} \backslash J$. Because a failed node can be repaired through two phases, for all $i\in R$ it has
\begin{equation}\label{eq33} W_{i} \subseteq \sum_{j \in \mathcal{H}_R^{(i)}} S_{R}^{j,i} + \sum_{i' \in R \backslash \{ i\}} T_{R}^{i',i}\;.\end{equation}
Thus
\begin{eqnarray*}
& &\sum_{i\in I} W_{i}
\subseteq  \sum_{i\in I} ( \sum_{j \in \mathcal{H}_R^{(i)}} S_{R}^{j,i} + \sum_{i' \in R \backslash \{ i \}} T_{R}^{i',i} ) \\
& = & \sum_{i\in I} ( \sum_{j \in \mathcal{H}_R^{(i)}} S_{R}^{j,i} + \sum_{i' \in \tilde{R} } T_{R}^{i',i} + \sum_{ i' \in I \backslash \{ i \} } T_{R}^{i',i} ) \\
& \stackrel{(a)}{\subseteq} & \sum_{i\in I} ( \sum_{j \in \mathcal{H}_R^{(i)}} S_{ R}^{j,i} + \sum_{i' \in \tilde{R} } T_{R}^{i',i} + \sum_{ i' \in I \setminus \{ i\}\atop j \in \mathcal{H}_R^{(i')}} S_{R}^{j,{i'}} ) \\& = & \sum_{i\in I} ( \sum_{j \in \mathcal{H}_R^{(i)}} S_{R}^{j,i} + \sum_{i' \in \tilde{R} } T_{R}^{i',i} ) + \sum_{ i \in I} \sum_{j \in \mathcal{H}_R^{(i)}} S_{R}^{j,i} \\
& = & \sum_{i\in I} ( \sum_{j \in \mathcal{H}_R^{(i)}} S_{R}^{j,i} + \sum_{i' \in \tilde{R} } T_{R}^{i',i} ) \\
& = & \sum_{i\in I} ( \sum_{j \in J} S_{R}^{j,i} + \sum_{j \in \tilde{\mathcal{H}}_R^{(i)}} S_{R}^{j,i} + \sum_{i' \in \tilde{R} } T_{R}^{i',i} ) \\
& \subseteq & \sum_{j\in J} W_{j} + \sum_{i\in I}( \sum_{ j \in \tilde{\mathcal{H}}_R^{(i)}} S_{R}^{j,i} + \sum_{i' \in \tilde{R}} T_{ R}^{i',i} ),
\end{eqnarray*}
where $(a)$ follows from $ T_{R}^{i',i} \subseteq \sum_{j \in \mathcal{H}_R^{(i')}} S_{R}^{j,i'}$, since the packets passed by node $i' $ to $i$ in the second repair phase are linear combinations of the packets it received in the first phase.

Therefore,
\begin{eqnarray*}
& & \dim \{ \sum_{i\in I} W_{i} \} - \dim \{ (\sum_{i\in I} W_{i}) \cap (\sum_{j\in J} W_{j}) \} \\
& = & \dim\{ \sum_{i\in I} W_{i} + \sum_{j\in J} W_{j} \} - \dim\{ \sum_{j\in J} W_{j} \} \\
& \le & \dim \{ \sum_{j\in J} W_{j} + \sum_{i\in I}( \sum_{ j \in \tilde{\mathcal{H}}_R^{(i)}} S_{R}^{j,i} + \sum_{i' \in \tilde{R}} T_{ R}^{i',i} ) \}\\\;\;&& -\dim \{ \sum_{j\in J} W_{j} \} \\
& \le & \dim \{ \sum_{i\in I} ( \sum_{ j \in \tilde{\mathcal{H}}_R^{(i)}} S_{ R}^{j,i} + \sum_{i' \in \tilde{R}} T_{R}^{i',i} ) \} \\
& \le & a((d-b) \beta_1 + (r-a) \beta_2)
\end{eqnarray*}
\end{proof}

The above lemma provides a fundamental result for proving the subspace properties.
Actually it holds for all linear exact cooperative regenerating codes, although we use it only for exact MBCR codes in the following.

\begin{property}\label{pro1}
For $1\leq i\neq j\leq n$, $\dim\{W_i\}=\alpha$, and
$\dim\{W_i\cap W_j\}=\beta_1$.
\end{property}
\begin{proof}
Without loss of generality, we prove that $\dim \{W_1\} = \alpha$, $\dim \{W_1 \cap W_2\} = \beta_1$.

Consider a particular fail-repair process where a data-collector connects to node $1,...k$, and for $1\leq i\leq k$ node $i$ is regenerated at the $i$-th stage and remains active thereafter. Moreover, node $i$ help repair node $j$ for all $1\leq i<j\leq k$, i.e., $\{1,...,i-1\}\subset \mathcal{H}_{R_i}^{(i)}$ for $1<i\leq k$, where $R_i$ is the set of failed nodes at the $i$-th stage. Since the data reconstruction property is held for any fail-repair process, we have $\mathbb{F}^{B} \subseteq W_1 + \cdots + W_k$ , which implies
\begin{equation}\label{eq3}B \le \dim \{W_1 + \cdots + W_k\}\;.\end{equation}
On the other hand,
\begin{eqnarray*}
&&\dim \{W_1 + \cdots + W_k\}\\&=&\dim \{W_1\}+\sum_{i=2}^k (\dim\{\sum_{j=1}^iW_j\}-\dim\{\sum_{j=1}^{i-1}W_j\})\\&=&\dim \{W_1\}+\sum_{i=2}^k(\dim \{W_i\}-\dim\{W_i\cap\sum_{j=1}^{i-1}W_j\})\\&\stackrel{(a)}{\le}&\alpha+\sum_{i=2}^k((d-i+1)\beta_1+(r-1)\beta_2)
\\&\stackrel{(b)}{=}&B\;,
\end{eqnarray*}
where (a) is from Lemma \ref{le1} and (b) from parameters of MBCR displayed in (\ref{eq2}). Because of (\ref{eq3}), (a) must hold with equality. Namely, $\dim\{ W_1\}=\alpha$ and \begin{equation}\label{eq4}\dim \{W_i\}-\dim\{W_i\cap\sum_{j=1}^{i-1}W_j\}=(d-i+1)\beta_1+(r-1)\beta_2\end{equation} for $2\leq i\leq k$. Thus we have proven $\dim \{W_1\}=\alpha$. A similar proof states $\dim \{W_i\}=\alpha$ for all $i$.

Fix $i=2$ in (\ref{eq4}), it follows $\dim \{W_2\}-\dim\{W_2\cap W_1\}=(d-1)\beta_1+(r-1)\beta_2$. Since $\dim \{W_2\}=\alpha$ and $\alpha=d\beta_1+(r-1)\beta_2$ for MBCR codes, we get $\dim \{W_1 \cap W_2\} = \beta_1$.
\end{proof}

\begin{property}\label{pro2}
For all $i \in R$, $$W_i = (\bigoplus_{j \in \mathcal{H}_R^{(i)}} S_{R} ^{ j, i}) \oplus (\bigoplus_{ i' \in R \atop i' \neq i} T_{R} ^{ i',i }).$$
\end{property}
\begin{proof}
As stated in (\ref{eq33}), $$ W_i \subseteq \sum_{j \in \mathcal{H}_R^{(i)}} S_{ R} ^{ j, i} + \sum_{ i^\prime \in R \atop i^\prime \neq i} T_{ R} ^{ i^\prime,i }\;.$$ Considering the dimensions of the two sides, we have
\begin{eqnarray*}
\alpha & \le & \dim \{ \sum_{j \in \mathcal{H}_R^{(i)}} S_{R} ^{ j, i} + \sum_{ i^\prime \in R \atop i^\prime \neq i} T_{ R} ^{ i^\prime,i } \} \\
       & {\le} & \sum_{j \in \mathcal{H}_R^{(i)}} \dim \{S_{R} ^{ j, i} \}+ \sum_{ i^\prime \in R \atop i^\prime \neq i} \dim \{T_{R} ^{ i^\prime,i } \}\\
       & \le & d \beta_1 + (r-1) \beta_2 \\
       & = & \alpha
\end{eqnarray*}
where the last equality comes from parameters of MBCR codes. Therefore, all the equalities above must hold. Then by Theorem \ref{th1}  the property is proved.
\end{proof}

\begin{corollary}\label{coro1}
For all $i,i'\in R, i'\neq i$ and $ j \in \mathcal{H}_R^{(i)}$, it has $\dim \{ S^{j,i}_{R} \} = \beta_1$ and $\dim \{ T^{i^\prime,i}_{R} \} = \beta_2$.
\end{corollary}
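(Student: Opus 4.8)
The plan is to extract the statement from the equality conditions already implicit in the proof of Property~\ref{pro2}. First I would recall two facts established above: Property~\ref{pro1} gives $\dim\{W_i\}=\alpha$ for every node $i$, and the MBCR parameters in (\ref{eq2}) give $\alpha=d\beta_1+(r-1)\beta_2$. Next, applying Property~\ref{pro2}, the subspace $W_i$ decomposes as the direct sum of the $d$ subspaces $S_R^{j,i}$ with $j\in\mathcal{H}_R^{(i)}$ together with the $r-1$ subspaces $T_R^{i',i}$ with $i'\in R\setminus\{i\}$, so Theorem~\ref{th1}(b) yields
$$\alpha=\dim\{W_i\}=\sum_{j\in\mathcal{H}_R^{(i)}}\dim\{S_R^{j,i}\}+\sum_{i'\in R,\,i'\neq i}\dim\{T_R^{i',i}\}\;.$$

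Then I would invoke the a priori bounds $\dim\{S_R^{j,i}\}\le\beta_1$ and $\dim\{T_R^{i',i}\}\le\beta_2$ stated in Section~3, which bound the right-hand side above by $d\beta_1+(r-1)\beta_2=\alpha$. Since the extremes of this chain coincide, every summand must attain its individual upper bound, forcing $\dim\{S_R^{j,i}\}=\beta_1$ for each $j\in\mathcal{H}_R^{(i)}$ and $\dim\{T_R^{i',i}\}=\beta_2$ for each $i'\in R\setminus\{i\}$, which is the claim. (Equivalently, one may note that the middle inequality in the displayed chain inside the proof of Property~\ref{pro2} was shown to hold with equality, and then read off the termwise equalities directly.)

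There is essentially no obstacle here; the corollary is just the equality-case bookkeeping of Property~\ref{pro2}. The only point that deserves to be spelled out is the elementary observation that equality in a sum of quantities, each bounded above, forces each term to meet its own bound — this is the step that upgrades the single global dimension identity into the per-subspace dimension conclusions.
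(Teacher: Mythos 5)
Your argument is correct and is exactly the equality-case bookkeeping the paper intends: the corollary is stated without a separate proof precisely because, as you note, the chain of inequalities in the proof of Property~\ref{pro2} holds with equality throughout, and combined with the bounds $\dim\{S^{j,i}_R\}\le\beta_1$, $\dim\{T^{i',i}_R\}\le\beta_2$ this forces each summand to attain its bound. Nothing is missing.
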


\begin{property}\label{pro3}
For all $i,i'\in R, i'\neq i$ and $ j \in \mathcal{H}_i$, it has $S^{j,i}_{R} = W_i \cap W_j$ and $T^{i,i^\prime}_{R} \oplus T^{i^\prime,i}_{R} = W_i \cap W_{i^\prime}$.
\end{property}
\begin{proof}
We have $S^{j,i}_{R}\subseteq W_i$ from Property \ref{pro2} and $S^{j,i}_{R}\subseteq W_j$ from the definition of $S^{j,i}_{R}$. Thus $S^{j,i}_{R} \subseteq W_i \cap W_j$. Similarly, $ T^{i^\prime,i}_{R}, T^{i,i'}_{R}\subseteq W_i \cap W_{i^\prime}$. Thus $T^{i^\prime,i}_{R}+ T^{i,i'}_{R}\subseteq W_i \cap W_{i^\prime}$

Form Property \ref{pro1} and Corollary \ref{coro1} we know $S^{j,i}_{R}$ and $W_i \cap W_j$ are both of dimension $\beta_1$. Hence $S^{j,i}_{R} = W_i \cap W_j$.

And,
$$ T^{i,i^\prime}_{R} \cap T^{i^\prime,i}_{R} \subseteq (\sum_{j \in \mathcal{H}_R^{(i)}} S^{j,i}_{R} ) \cap T^{i^\prime,i}_{R} = \{0\}\;,$$
where the last equality comes from Property \ref{pro2}.
Therefore $ T^{i,i^\prime}_{R} \oplus T^{i^\prime,i}_{R} \subseteq W_i \cap W_{i^\prime} $. The left side has dimension $2\beta_2=\beta_1$ from Corollary \ref{coro1} and parameters in (\ref{eq2}) for MBCR point, while the right side has dimension $\beta_1$ from Property \ref{pro1}. Hence $T^{i,i^\prime}_{R} \oplus T^{i^\prime,i}_{R} = W_i \cap W_{i^\prime}$.
\end{proof}

\subsection{Impossibility of exact repair-by-transfer}
In \cite{Rashmi:ExplicitMBR}, it studies the subspace properties of exact regenerating codes with minimum repair bandwidth and gives an explicit code in the case of $n=d+1$. The code can be seemed as a direct construction from the properties. Its significance also relies on the repair-by-transfer mode. In the following we show impossibility of exact repair-by-transfer codes at the MBCR point.

For cooperative regenerating code, repair-by-transfer is required at the first phase of the repair process. That is, in the first phase each helper node directly transfers $\beta_1$ packets it stores to the newcomer. Our impossibility result is based on the subspace properties we derived above.

\begin{theorem}
When $r\geq 2$ and $d\geq 2$, there does not exist a linear exact MBCR code that achieves repair-by-transfer.
\end{theorem}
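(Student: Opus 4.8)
The plan is to assume, for contradiction, that a linear exact MBCR code achieves repair-by-transfer when $r\geq 2$ and $d\geq 2$, and derive a contradiction from the subspace properties established above. Repair-by-transfer at the first phase means that when node $j$ helps repair node $i$ in a failure set $R$, the subspace $S_R^{j,i}$ it transfers is a fixed $\beta_1$-dimensional subspace of $W_j$ that does \emph{not} depend on $i$ (the helper just reads out the same $\beta_1$ stored packets regardless of which newcomer it is serving, and cannot depend on $R$ either beyond the choice of which nodes are present). So fix a node $j$ and let $S_j\subseteq W_j$ denote this single transferred subspace, $\dim S_j=\beta_1$ by Corollary~\ref{coro1}.

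First I would use Property~\ref{pro3}: for any failure set $R$ containing $i$ with $j\in\mathcal{H}_R^{(i)}$, we have $S_R^{j,i}=W_i\cap W_j$. Under repair-by-transfer $S_R^{j,i}=S_j$, so $W_i\cap W_j=S_j$ for \emph{every} node $i\ne j$ that $j$ can help (which is essentially every other node, by choosing $R$ and the helper sets appropriately). The key consequence is that $W_i\cap W_j$ is the same subspace $S_j$ for all $i\ne j$. By symmetry, $W_i\cap W_j$ also equals $S_i$, so in fact $S_i=S_j=:S$ is a single $\beta_1$-dimensional subspace with $W_i\cap W_j=S$ for all $i\ne j$; in particular $S\subseteq W_i$ for all $i$, so $S\subseteq\bigcap_i W_i$.

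Next I would extract the contradiction using the second half of Property~\ref{pro3} together with Property~\ref{pro2}. Pick a failure set $R$ with $|R|=r\geq 2$, take distinct $i,i'\in R$, and a helper $j\in\mathcal{H}_R^{(i)}\cap\mathcal{H}_R^{(i')}$ (possible since $d\geq 2$ gives enough room, and indeed $d\ge k\ge 1$ plus $d+r\le n$ must be checked to guarantee a common helper; if needed one enlarges the argument slightly). Property~\ref{pro3} gives $T_R^{i,i'}\oplus T_R^{i',i}=W_i\cap W_{i'}=S$. But Property~\ref{pro2} says $W_i=\big(\bigoplus_{j'\in\mathcal{H}_R^{(i)}}S_R^{j',i}\big)\oplus\big(\bigoplus_{i''\ne i}T_R^{i'',i}\big)$, and under repair-by-transfer every $S_R^{j',i}=S$, forcing all these $\beta_1$-dimensional pieces to be equal yet to sum directly — impossible unless the sum has only one term, i.e. $d=1$. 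Since $d\geq 2$ there are at least two helper summands $S_R^{j_1,i}$ and $S_R^{j_2,i}$, both equal to $S$, contradicting the directness $S_R^{j_1,i}\cap S_R^{j_2,i}=\{0\}$ of the decomposition in Property~\ref{pro2} (a nonzero intersection since $\beta_1=2\beta_2>0$).

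The main obstacle is pinning down exactly what "repair-by-transfer" forces on the $S_R^{j,i}$ — specifically arguing that the transferred subspace is independent of the newcomer $i$ and of the failure pattern $R$, so that the equalities $W_i\cap W_j=S_R^{j,i}$ collapse many intersections to a single subspace $S$. Once that reduction is in hand, the contradiction is immediate: Property~\ref{pro2} demands $d$ mutually disjoint $\beta_1$-dimensional helper-subspaces inside $W_i$, but repair-by-transfer makes all of them equal to $S$, and $d\geq2$ with $\beta_1>0$ rules this out. The hypothesis $r\geq2$ is what makes the cooperative exchange subspaces $T_R^{i',i}$ nonempty and lets Property~\ref{pro3} enter, while $d\geq2$ is what produces the fatal repetition among the helper contributions.
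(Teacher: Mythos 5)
Your argument rests entirely on the premise that repair-by-transfer forces the transferred subspace $S_R^{j,i}$ to be one fixed subspace $S_j\subseteq W_j$ independent of the newcomer $i$, and that premise is both unjustified and false. Repair-by-transfer only requires that the helper perform no computation: the $\beta_1$ packets it sends must be $\beta_1$ of the $\alpha$ packets it actually stores (a subset of a fixed basis of $W_j$), but \emph{which} subset may depend on the identity of the newcomer and on the repair round --- exactly as in the known repair-by-transfer MBR codes for single failures, where the helper reads out a different stored packet depending on which node failed; under your reading those codes would not qualify as repair-by-transfer at all. Worse, your premise already contradicts Properties~\ref{pro2} and~\ref{pro3} for \emph{every} linear exact MBCR code with $d\geq 2$, repair-by-transfer or not: in a round where $j$ is the newcomer with helpers $i\neq i'$, Property~\ref{pro3} gives $S_R^{i,j}=W_j\cap W_i$ and $S_R^{i',j}=W_j\cap W_{i'}$, and Property~\ref{pro2} puts these in direct sum, so they cannot both equal a common nonzero $S_j$. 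This is essentially the same contradiction you derive at the end --- which shows your concluding step never actually uses repair-by-transfer; it refutes only the premise, and hence proves nothing about repair-by-transfer codes.

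The paper's proof uses the correct, weaker consequence of repair-by-transfer and then counts. Fix node $1$ with storage basis $G=\{g_1^{(1)},\dots,g_\alpha^{(1)}\}$. For each $i\in\{2,\dots,n\}$ choose a round $R_i$ in which node $1$ helps newcomer $i$; repair-by-transfer says the transferred packets form a subset $G_i\subseteq G$ with $|G_i|=\beta_1$ and $S^{1,i}_{R_i}=\mathrm{span}\{G_i\}$. Applying Property~\ref{pro3} twice and Property~\ref{pro2} once, to a round $R_{i,j}$ in which node $1$ fails and both $i$ and $j$ serve as its helpers (this is where $d\geq 2$ is needed), one gets $(W_1\cap W_i)\cap(W_1\cap W_j)=S^{i,1}_{R_{i,j}}\cap S^{j,1}_{R_{i,j}}=\{0\}$, hence $G_i\cap G_j=\emptyset$ for $i\neq j$. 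Therefore
\begin{equation*}
\alpha=|G|\;\geq\;\sum_{i=2}^{n}|G_i|=(n-1)\beta_1\;\geq\;(d+r-1)\beta_1\;>\;d\beta_1+(r-1)\beta_2=\alpha,
\end{equation*}
using $r\geq 2$ and $\beta_1=2\beta_2>0$. The pairwise disjointness of the basis subsets $G_i$ and the resulting count is the idea your proposal is missing, and it is the only place where repair-by-transfer (as opposed to mere linearity) genuinely enters.
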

\begin{proof}
On the contrary, we assume there is a $(n,k,d,r, \alpha,\beta_1,\beta_2)$ linear exact MBCR code that achieves repair-by-transfer. For any data file $u\in\mathbb{F}_q^B$, suppose node $1$ stores $u^\tau g^{(1)}_1,...,u^\tau g^{(1)}_\alpha$, where $g^{(1)}_1,...,g^{(1)}_\alpha$ are linearly independent vectors in $\mathbb{F}_q^B$. Denote $G=\{g^{(1)}_1,...,g^{(1)}_\alpha\}$.

For $2\leq i\leq n$, let $R_i$ be a set of $r$ failed nodes such that $i\in R_i$ and $1\in\mathcal{H}_{R_i}^{(i)}$. Suppose node $1$ transfers $u^\tau g^{(1)}_{i_1},...,u^\tau g^{(1)}_{i_{\beta_1}}$ to node $i$ in repairing $R_i$. Denote $G_i=\{g^{(1)}_{i_1},...,g^{(1)}_{i_{\beta_1}}\}$. From the definition of repair-by-transfer, $G_i\subset G$.  It is obvious that $\bigcup_{i=2}^nG_i\subseteq G$.

For any $i,j\in\{2,...,n\}$, let $R_{i,j}$ be a set of $r$ failed nodes such that $1\in R_{i,j}$ and $\{i,j\}\subseteq \mathcal{H}_{R_{i,j}}^{(1)}$. Then
\begin{eqnarray*}
G_i\cap G_j&\subset& S^{1,i}_{R_i} \cap S^{1,j}_{R_j}\\&=&(W_1\cap W_i)\cap(W_1\cap W_j)\\&=&S_{R_{i,j}}^{i,1}\cap S_{R_{i,j}}^{j,1}\\&=&\{0\}\;,
\end{eqnarray*}
where the relation $\subset$ holds because $S^{1,i}_{R_i}=span\{G_i\}$, the first two equalities come from Property \ref{pro3}, and the last equality is from Property \ref{pro2}. Since $G_i$ and $G_j$ contain only nonzero vectors, it must hold $G_i\cap G_j=\emptyset$.

Therefore $|G|\geq |\bigcup_{i=2}^nG_i|=\sum_{i=2}^n|G_i|$. Since $S^{1,i}_{R_i}=span\{G_i\}$ for $2\leq i\leq n$, from Corollary \ref{coro1} it has $|G_i|=\beta_1$. Thus $|G|\geq (n-1)\beta_1\geq (d+r-1)\beta_1>\alpha$, where the last $>$ is from $\beta_1=2\beta_2>0$ and $\alpha=d\beta_1+(r-1)\beta_2$ for MBCR codes. On the other hand, from Property \ref{pro1} it has $|G|=\alpha$. Hence we get a contradiction.
\end{proof}

The condition $r\geq 2$ is trivial for multiple node failures, and $d\geq 2$ is necessary to guarantee the repair bandwidth $\gamma<B$. Thus the above theorem proves there is no non-trivial linear exact MBCR codes which achieves repair-by-transfer.

\section{Explicit Construction of MBCR Codes}
We consider the scalar MBCR code,  i.e., $\beta_2=1$. Then according to (\ref{eq2}) it has parameters $\beta_1=2\beta_2=2$, $\alpha=d\beta_1+(r-1)\beta_2=2d+r-1$, and $B=k(2d+r-k)$. Note that our construction applies to all positive integers of $(n,k,d,r)$ such that $d+r\leq n$ and $d\geq k$.

For a data file $u\in\mathbb{F}_q^B$, we construct a bivariate polynomial over $\mathbb{F}_q$, denoted by
\begin{eqnarray}F(X,Y)=\sum_{0\leq i<k\atop 0\leq j<k}a_{ij}X^iY^j&+&\sum_{0\leq i<k\atop k\leq j<d+r}b_{ij}X^iY^j\nonumber\\&+&\sum_{k\leq i<d\atop 0\leq j<k}c_{ij}X^iY^j,\label{eq6}\end{eqnarray}
such that the $B$ components of $u$ are just its coefficients. Note $F(X,Y)$ has $k^2+k(d+r-k)+k(d-k)=k(2d+r-k)=B$ coefficients.

Then fix $n$ distinct elements $x_1,...,x_n$ in $\mathbb{F}_q$, and similarly fix distinct $y_1,...,y_n$ in $\mathbb{F}_q$. Note that it is allowed $x_i=y_j$ for some $1\leq i,j\leq n$. Thus about the field size we only require $q\geq n$.

For each node $i\in\{1,...,n\}$, it stores the values of $F(X,Y)$ at $\alpha$ points, i.e.,
$$F(x_i,y_i),\;F(x_i,y_{i\oplus1}),\;...,\;F(x_i,y_{i\oplus(d+r-1)}),$$
$$F(x_{i\oplus1},y_i),\;F(x_{i\oplus2},y_i),\;...,\;F(x_{i\oplus(d-1)},y_i),$$
where $\oplus$ denotes addition modulo $n$.
Actually, the first $d+r$ values determine the univariate polynomial $f_i(Y)=F(x_i,Y)$, since $f_i(Y)$ is of degree less than $d+r$ and can be derived from interpolation at $d+r$ distinct points. Similarly, the first value and the last $d-1$ values determine the univariate polynomial $g_i(X)=F(X,y_i)$. Therefore, we also say node $i$ stores two univariate polynomials $f_i(Y)$ and $g_i(X)$.

The validity of the above code  as an exact regenerating code for the MBCR point is established in two aspects.

(1) {\it Exact Cooperative Regeneration: }
Without loss of generality, suppose node $1,...,r$ fail and newcomers, also named node $1,...,r$ for simplicity, are to replace the failed nodes by the repair process.

In the first phase, each node $i\in\{1,...,r\}$ connects to $d$ survival nodes and downloads $\beta_1=2$ packets from each. Specifically, suppose $i$ connects to nodes $\{i_1,...,i_d\}\subseteq \{1,...,n\}\setminus\{1,...,r\}$. Then node $i_j$ sends $(F(x_{i_j},y_i),F(x_i,y_{i_j}))\in\mathbb{F}_q^2$ to $i$ for $1\leq j\leq d$. Note that node $i_j$ actually stores polynomials $f_{i_j}(Y)$ and $g_{i_j}(X)$, so it can compute $(F(x_{i_j},y_i),F(x_i,y_{i_j}))$$=(f_{i_j}(y_i),g_{i_j}(x_i))$.

Upon receiving $F(x_{i_1},y_i), F(x_{i_2},y_i),...,F(x_{i_d},y_i)$, node $i$ can get $g_i(X)=F(X,y_i)$ by the Lagrange interpolation formula, since $g_i(X)$ is of degree less than $d$. Note that node $i$ also receives $F(x_i,y_{i_1}),...,F(x_i,y_{i_d})$ and these will be used later.

In the second phase, each node $i\in\{1,...,r\}$ connects to the other $r-1$ nodes, i.e., $\{1,...,r\}\setminus\{i\}$, and downloads $\beta_2=1$ packets from each. Specifically, for $j\in\{1,...,r\}\setminus\{i\}$, node $j$ sends $F(x_i,y_j)$ to node $i$. Node $j$ can do this because it has recovered $g_j(X)$ in the first phase. Additionally, each node $i$ can compute $F(x_i,y_i)=g_i(x_i)$ by itself.

Now node $i$ has obtained $F(x_i,y_1), ..., F(x_i,y_r)$ in the second phase,  along with $F(x_i,y_{i_1}),...,$ $ F(x_i,y_{i_d})$ it received in the first phase, it can recover $f_i(Y)=F(x_i,Y)$ by interpolation.

Thus node $i$ recovers $f_i(Y)$ and $g_i(X)$, and so is exactly regenerated.

(2) {\it Data Reconstruction: }Suppose a data-collector connects to nodes $\{i_1,...,i_k\}$ to retrieve the original data file. It is equivalent to recover the polynomial $F(X,Y)$ from $\{f_{i_l}(Y),g_{i_l}(X)\mid 1\leq l\leq k\}$.

Denote \begin{eqnarray*}F(X,Y)&=&\tilde{F}(X,Y)+\sum_{j=k}^{d+r-1}(\sum_{i=0}^{k-1}b_{ij}X^i)Y^j
\\&=&\tilde{F}(X,Y)+\sum_{j=k}^{d+r-1}B_j(X)Y^j\;.\end{eqnarray*}
It can see in $\tilde{F}(X,Y)$ the degree of $Y$ is less than $k$ and for $k\leq j\leq d+r-1$ the coefficient of $Y^j$, $B_j(X)=\sum_{i=0}^{k-1}b_{ij}X^i$, is a polynomial of degree less than $k$.
For $1\leq l\leq k$, suppose $$f_{i_l}(Y)=f_0^{(i_l)}+f_1^{(i_l)}Y+\cdots+f_{d+r-1}^{(i_l)}Y^{d+r-1}\;.$$
Then for $k\leq j\leq d+r$, comparing the coefficient of $Y^j$ in $F(X,Y)$ and that in $f_{i_l}(Y)$, we get $B_j(x_{i_l})=f_j^{(i_l)}$. That is, we get the evaluation of $B_j(X)$ at $k$ distinct points $x_{i_1},...,x_{i_k}$. So for $k\leq j\leq d+r-1$, $B_j(X)$ can be recovered by interpolation, corresponding to the $b_{ij}, 0\leq i<k,k\leq j<d+r$, in (\ref{eq6}) are obtained.

Similarly, we can get $c_{ij}, k\leq i<d, 0\leq j<k$. Based on $b_{ij}$'s and $c_{ij}$'s we can further get $a_{ij}$'s in a similar way. Thus the polynomial $F(X,Y)$ is recovered, which gives the original data file.

\subsection{Subspace properties of the code}
Although it is more convenient to describe the above code in a polynomial form, we transform it into a traditional linear code to verify the subspace properties proved in Section 3.

Without loss of generality, we investigate the subspace stored by node $1$. By using the notations above, node $1$ stores a subspace spanned by:
$$\begin{array}{lllll}
(1,y_1,...&,y_1^{d+r-1}&,x_1,...&,x_1^{d-1}&,x_1y_1,...),\\
(1,y_2,...&,y_2^{d+r-1}&,x_1,...&,x_1^{d-1}&,x_1y_2,...),\\
&&\vdots&\\
(1,y_{d+r},...&,y_{d+r}^{d+r-1}&,x_1,...&,x_1^{d-1}&,x_1y_{d+r},...),\\
(1,y_1,...&,y_1^{d+r-1}&,x_2,...&,x_2^{d-1}&,x_2y_1,...),\\
&&\vdots&\\
(1,y_1,...&,y_1^{d+r-1}&,x_{d-1},...&,x_{d-1}^{d-1}&,x_{d-1}y_1,...).
\end{array}$$

That is, the first $d+r$ components of these vectors correspond to the monomials $u_{0j}Y^j$ in $F(X,Y)$ for $0\leq j<d+r$, the next $d-1$ components correspond to $u_{i0}X^i$ for $1\leq i<d$, and the remain components correspond to $u_{ij}X^iY^j$ for $i>0$ and $j>0$. Obviously, the above vectors are linearly independent, so $\dim \{W_1\}=2d+r-1=\alpha$ as proved in Property \ref{pro1}.

For any two nodes $i$ and $j$, the intersection of their spaces is spanned by
$$(1,y_i,...,y_i^{d+r-1},x_j,...,x_j^{d-1},x_jy_i,...),$$
$$(1,y_j,...,y_j^{d+r-1},x_i,...,x_i^{d-1},x_iy_j,...).$$
Correspondingly, in the repair process where $i\in R$ and $j\in\mathcal{H}_R^{(i)}$, we can see node $j$ sends $(F(x_i,y_j), F(x_j,y_i))$ to $i$, in accordance with $\dim \{W_i\cap W_j\}=\beta_1=2$ and $S^{j,i}_{R}=W_i\cap W_j$.

For another node $i'\in R$, we can see in the second repair phase, $i'$ sends $g_{i'}(x_i)=F(x_i,y_{i'})$ to $i$. The corresponding subspace is spanned by
$$(1,y_{i'},...,y_{i'}^{d+r-1},x_i,...,x_i^{d-1},x_iy_{i'},...).$$ Thus
$\dim\{T^{i',i}_{R}\}=\beta_1=1$ and $T^{i',i}_{R}\oplus T^{i,i'}_{R}=W_i\cap W_{i'}$. Based on above observations, it is also easy to verify Property \ref{pro2}.

\section{Conclusion}

We explicitly construct exact MBCR codes for all possible values of $n,k,d,r$, which can be seemed as a counterpart of the result in regenerating codes for single-failure recovery \cite{Rashmi:productMatrix}, i.e.,
explicit constructions of MBR (minimum repair-bandwidth regeneration) codes has been given for all $n,k,d$. Our code is expressed in the polynomial form and the data reconstruction is accomplished by bivariate polynomial interpolation. We note some previously given explicit constructions \cite{Rashmi:productMatrix} can also be transformed into  polynomial forms. Polynomials are expected to do more in regenerating codes.


\begin{thebibliography}{100}

\bibitem{ACLY:NetCodFirst}
R. Ahlswede, N. Cai, S.-Y. R. Li, and R. W. Yeung, ``Network information
flow," IEEE Trans. Inf. Theory, vol. 46, pp. 1204--1216, 2000.

\bibitem{B:TotalRecall}
R. Bhagwan, K. Tati, Y. Cheng, S. Savage, and G. Voelker, ``Total
recall: system support for automated availability management", in
Proc. of the 1st Conf. on Networked Systems Design and
Implementation, San Francisco, Mar. 2004.

\bibitem{Dimakis:RCfirst}
A. G. Dimakis, P. B. Godfrey, Y. Wu, M. J. Wainwright, and K.
Ramchandran, ``Network coding for distributed storage system", in
Proc. IEEE Int. Conf. on Computer Commun. (INFOCOM ¡¯07), Anchorage,
Alaska, May 2007.

\bibitem{Hu:Cooperative}
Y. Hu, Y. Xu, X. Wang, C. Zhan, and P. Li, ``Cooperative recovery of
distributed storage systems from multiple losses with network
coding", IEEE J. on Selected Areas in Commun., vol. 28, no. 2, pp.
268--275, Feb. 2010.

\bibitem{Ker:Coordinated}
A. Kermarrec, N. Le Scouarnec, and G. Straub, ``Repairing Multiple
Failures with Coordinated and Adaptive Regenerating Codes", in Net
Cod¡¯2011: International Symposium on Network Coding, July 2011.

\bibitem{Ketal:OceanStore}
J. Kubiatowicz et al., ``OceanStore: an architecture for
global-scale persistent storage", in Proc. 9th Int. Conf. on
Architectural Support for programming Languages and Operating
Systems (ASPLOS), Cambridge, MA, Nov. 2000, pp. 190--201.

\bibitem{Rashmi:ExplicitMBR}
K. V. Rashmi, Nihar B. Shah, P. Vijay Kumar, Kannan Ramchandran, ``
Explicit construction of optimal exact regenerating codes for
distributed storage", Forty-Seventh Annual Allerton Conference,
Allerton House, UIUC, ILLinois, USA, 2009.

\bibitem{Rashmi:productMatrix}
K. V. Rashmi, Nihar B. Shah, P. Vijay Kumar,`` Optimal
Exact-Regenerating Codes for Distributed Storage at the MSR and MBR
Points via a Product-Matrix Construction", IEEE Transactions on
Information Theory 57(8): 5227--5239 (2011).

\bibitem{Shah:IA}
Nihar B. Shah, K. V. Rashmi, P. Vijay Kumar, Kannan Ramchandran,
``Interference Alignment in Regenerating Codes for Distributed
Storage: Necessity and Code Constructions", IEEE Transactions on
Information Theory 58(4): 2134--2158 (2012).

\bibitem{Shah:repairBYtrans}
Nihar B. Shah, K. V. Rashmi, P. V. Kumar, and K. Ramchandran, ``Distributed
storage codes with repair-by-transfer and non-achievability of
interior points on the storage-bandwidth tradeoff," IEEE Trans. Inf.
Theory, vol. 58, no. 3, pp. 1837--1852, Mar. 2012.

\bibitem{Shum:MSCR}
Kenneth W. Shum, ``Cooperative regenerating codes for distributed
storage systems," in IEEE Int. Conf. Comm. (ICC), Kyoto, Jun. 2011.

\bibitem{Shum&Hu:MBCR}
Kenneth W. Shum, Yuchong Hu, ``Exact minimum-repair-bandwidth
cooperative regenerating codes for distributed storage systems".
ISIT 2011: 1442-1446.

\bibitem{Shun&Hu:existence}
K. W. Shum and Y. Hu, ``Existence of minimum-repair-bandwidth
cooperative regenerating codes," in Int. Symp. on Network Coding
(Netcod), Beijing, Jul. 2011.

\bibitem{Shum&Hu:repairBYtrans}
K. W. Shum and Y. Hu, ``Repair-by-transfer in distributed storage
system," in Information Theory and Applications Workshop, San Diego,
Feb. 2012.

\bibitem{Nicolas:MSCR}
N. Le Scouarnec, ``Exact scalar minimum storage coordinated regenerating codes," ISIT 2012.

\bibitem{Wang:MFR}
X. Wang, Y. Xu, Y. Hu, and K. Ou, ``MFR: Multi-loss flexible
recovery in distributed storage systems", in Proc. IEEE Int. Conf.
on Comm. (ICC), Capetown, South Africa, May 2010.

\bibitem{WDR07:DeterminiticRC}
Y. Wu, A. G. Dimakis, and K. Ramchandran, ``Deterministic
regenerating codes for distributed storage", in Allerton Conference
on Control, Computing, and Communication, (Urbana-Champaign, IL),
September 2007.




\end{thebibliography}
\end{document}